\newtheorem{theorem}{Theorem}[section]
\newtheorem{theorem*}{Theorem}
\newtheorem{proposition*}[theorem*]{Proposition}
\newtheorem{corollary}[theorem]{Corollary}
\newtheorem{corollary*}[theorem*]{Corollary}
\newtheorem{example}[theorem]{Example}
\theoremstyle{remark}
\newtheorem{remark*}[theorem*]{Remark}
\newtheorem{note*}[theorem*]{Note}
\title{Bitcoin Selfish Mining and Dyck Words}
\subjclass[2010]{68M01, 60G40, 91A60.}
\keywords{Bitcoin; selfish mining; Catalan distribution; Dyck words}
\author[C. Grunspan]{Cyril Grunspan}
\address{Cyril Grunspan\newline{}\indent L\'eonard de Vinci P\^ole Univ, Finance Lab\newline{}\indent Paris, France, }
\email{cyril.grunspan@devinci.fr}
\author[R. P\'{e}rez-Marco]{Ricardo P\'{e}rez-Marco}
\address{Ricardo P\'{e}rez-Marco\newline{}\indent CNRS, IMJ-PRG, Univ. Paris-Diderot \newline{}\indent Paris, France}
\email{ricardo.perez.marco@gmail.com}
\begin{document}

\begin{abstract}
  We give a straightforward proof for the formula giving the long-term
  apparent hashrate of the Selfish Mining strategy in Bitcoin using only elementary
  probabilities and combinatorics, and more precisely, Dyck words. There is no need
  to compute stationary probabilities on Markov chain nor stopping times for
  Poisson processes as it was previously done. We consider also several other block
  withholding strategies.
\end{abstract}

\date{February 4, 2019}

\maketitle

\date{February 5, 2019}

\section{Introduction}

Selfish mining (in short SM) is a particular non-stop strategy of block
withholding strategy described in {\cite{ES14}} which exploits a flaw in the
Bitcoin protocol in the difficulty adjustment formula {\cite{GPM18}}. The
strategy is made of attack cycles. During each attack cycle, the attacker adds
blocks to a secret fork and then, broadcasts them to peers with an appropriate
timing. This is a deviant strategy from the Bitcoin protocol since an honest
miner never withholds blocks and always mines on top of the last block of the
official blockchain {\cite{N08}}.

As explained in {\cite{GPM18}} the good objective function based on sound
economics principles in order to compare profitabilities of mining strategies
with repetition is the revenue ratio $\frac{\mathbb{E} [R]}{\mathbb{E} [T]}$
where $R$ is the revenue of the miner per attack cyle and $T$ is the duration
time per cycle. After a difficulty adjustment, this mean duration time becomes
equal to $\mathbb{E} [L] \cdot \tau_B$ where $L$ is the number of blocks
added to the official blockchain by the network per attack cycle and $\tau_B =
600$ sec. {\cite{GPM18e}}. Thus, the objective function becomes proportional
to the long-term apparent hashrate of the strategy $\tilde{q} =
\frac{\mathbb{E} [Z]}{\mathbb{E} [L]}$ where $Z$ is the number of blocks
added by the attacker to the official blockchain per attack cycle. Several
methods have been conceived to compute \ $\tilde{q}$. In {\cite{ES14}}, first
a stationary probability is computed for a Markov chain. In {\cite{GPM18}} we
use martingale techniques and consider Poisson processes and associated
stopping times. The revenue ratio is then computed at once using Doob's
stopping time theorem. This last method has the advantage to fit the correct
profitability analysis, and to identify the source of the weakness of the
protocol. It allows a Bitcoin Improvement Proposal (BIP) to prevent the
attack. It also yields the mean duration time before the attack becomes
profitable. This last fact is out of reach with pure Markov chain models.

As usual, the relative hashrate of the honest miners (resp. attacker) is $p$
(resp. $q$) and $\gamma$ denotes its ``connectivity''.
We have $p + q = 1$, $q < \frac{1}{2}$ and $0 \leq \gamma
\leq 1$. We consider that whenever a competition occurs between two blocks or
two forks, there is a fraction $\gamma$ of the honest miners who mines on top
of a block validated by the attacker.

\section{Attack cycle and Dyck word}

An attack cycle for the SM strategy can be described as a sequence $X_0 \ldots
X_n$ with $X_i \in \{S, H\}$. The index $i$ indicates the $i$-th block
validated since the beginning of the cycle and letters $S, H$ determine the
miner who has discovered this block between the selfish miner ($S$) and the
honest miners ($H$).

\begin{example}
  The sequence $\text{SSSHSHH}$ means that the selfish miner has first
  validated three blocks in a row, then the honest miners have mined one, then
  the selfish miner has validated a new one and finally the honest miners have
  mined two blocks. At this point, the advantage of the selfish miner is only
  of one block. So according to the SM strategy, he decides to publish his
  whole fork and ends his attack cycle. In that case, we have $L = Z = 4$.
\end{example}

We are interested in the distribution of $L$.

\begin{theorem}
  We have $\mathbb{P} [L = 1] = p, \mathbb{P} [L = 2] = pq + pq^2$ and for
  $n \geqslant 3$, $\mathbb{P} [L = n] = pq^2  (pq)^{n - 2} C_{n - 2}$ where
  $C_n = \frac{(2 n) !}{n! (n + 1) !}$ is the $n$-th Catalan number.
\end{theorem}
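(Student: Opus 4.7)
The plan is to encode each attack cycle as a simple walk on the attacker's lead $\Delta = a - h$, where each letter $S$ steps $\Delta$ by $+1$ (probability $q$) and each letter $H$ by $-1$ (probability $p$). First I would translate the SM termination rules for this walk starting at $\Delta = 0$: (i) if $X_0 = H$, the cycle ends at once with $L = 1$; (ii) along the sequence $SH$ (so $X_0 = S$, $X_1 = H$), the attacker publishes his single block, a race occurs, and exactly one additional block closes the cycle with $L = 2$ regardless of its outcome; (iii) otherwise $X_0 X_1 = SS$ and the walk reaches $\Delta = 2$, after which the cycle runs until the first time $\Delta$ returns to $1$, at which moment the attacker releases his whole secret branch, strictly beats the honest branch, and the cycle closes with $L = a$, the attacker's chain length.

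Granting this dictionary, the three formulas follow quickly. Case (i) gives $\mathbb{P}[L = 1] = p$ and case (ii) contributes $pq$ to $\mathbb{P}[L = 2]$. For case (iii) with $L = n \geq 2$, the eligible sequences begin with $SS$, end with $H$, and in between follow a walk from $\Delta = 2$ back to $\Delta = 2$ that stays $\geq 2$ throughout. Setting $Y = \Delta - 2$, these interior walks are exactly Dyck paths of length $2(n-2)$ from $0$ to $0$ staying non-negative, counted by $C_{n-2}$. Each such sequence has $n$ letters $S$ and $n - 1$ letters $H$, hence probability $q^n p^{n-1} = pq^2(pq)^{n-2}$. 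This yields $pq^2$ for $L = 2$ (the unique path $SSH$, matching $C_0 = 1$) and $pq^2(pq)^{n-2} C_{n-2}$ for $n \geq 3$, which combined with (i) and (ii) gives the three stated values.

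The main obstacle is conceptual rather than combinatorial: one must carefully extract the SM termination rule from the strategy's definition and verify in case (iii) both that the cycle genuinely stops at the first return of $\Delta$ to $1$ (and not at any intermediate moment during the publication management for $\Delta \geq 2$) and that at that instant the attacker's revealed chain strictly dominates the public chain, forcing $L = a$. Once these reductions are in place, the Dyck enumeration is standard. As a sanity check, summing the three cases and using $\sum_{k \geq 0} C_k (pq)^k = \frac{1-\sqrt{1-4pq}}{2pq} = \frac{1}{p}$ (valid because $q < \frac{1}{2}$ gives $\sqrt{1-4pq} = 1 - 2q$) recovers total mass $1$.
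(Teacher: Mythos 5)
Your proof is correct and follows essentially the same route as the paper: both decompose the sequences with $L=n\geq 3$ as $SS\,X_1\cdots X_{2(n-2)}\,H$ with the interior a Dyck word (your lead-walk $\Delta$ staying $\geq 2$ is just the path form of that encoding), giving probability $p^{n-1}q^n C_{n-2}$, and both treat $L=1$ and $L=2$ by direct inspection. Your regrouping of the $L=2$ cases into $pq$ (the race after $SH$) plus $pq^2$ (the $n=2$ instance of the Dyck count) and the total-mass sanity check are minor, harmless additions.
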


\begin{proof}
  For $n \geqslant 3$, we note that $\{L = n\}$ is a collection of sequences
  of the form $w = \text{SSX}_1 \cdots X_{2 (n - 2)} H$ with $X_i \in \{ S, H
  \}$ for all $i$, such that if $S$ and $H$ are respectively replaced by the
  brackets ``(`` and ``)'' then, $X_1 \cdots X_{2 (n - 2)}$ is a Dyck word
  (i.e., balanced parentheses) with length $2 (n - 2)$ (see {\cite{K08}}). The
  number of letters ``$S$'' (resp. ``$H$'') in $w$ is $n$ (resp. $n - 1$). So,
  we get $\mathbb{P} [L = n] = p^{n - 1} q^n C_{n - 2}$ (see {\cite{K08}}).
  Finally, we note that $\{L = 1\} = \{H\}, \{L = 2\} = \{\text{SSH},
  \text{SHS}, \text{SHH}\}$. Hence we get the result.
\end{proof}

\begin{corollary}
  \label{el}We have $\mathbb{E} [L] = 1 + \frac{p^2 q}{p - q}$
\end{corollary}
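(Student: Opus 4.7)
The plan is to split $\mathbb{E}[L]=\sum_{n\ge 1} n\,\mathbb{P}[L=n]$ into the first two terms and a tail sum, and then evaluate the tail using the generating function of the Catalan numbers. Concretely, using the distribution of $L$ from the previous theorem,
\[
\mathbb{E}[L]=p+2(pq+pq^2)+pq^2\sum_{k\ge 1}(k+2)\,C_k\,(pq)^k,
\]
after the reindexing $k=n-2$.

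The first step is to record the two closed forms needed at $x=pq$. Let $C(x)=\sum_{k\ge 0}C_k x^k=\frac{1-\sqrt{1-4x}}{2x}$. Since $p+q=1$ and $q<\tfrac12$, one has $1-4pq=(p-q)^2$ and $\sqrt{1-4pq}=p-q>0$, which gives $C(pq)=\frac{1}{p}$, hence $\sum_{k\ge 1}C_k(pq)^k=\frac{q}{p}$. Differentiating the identity $2xC(x)=1-\sqrt{1-4x}$ and evaluating at $x=pq$ yields $xC'(x)\big|_{x=pq}=\frac{q}{p(p-q)}$, i.e.\ $\sum_{k\ge 1}kC_k(pq)^k=\frac{q}{p(p-q)}$.

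Plugging these two evaluations into the tail sum and collecting gives
\[
pq^2\sum_{k\ge 1}(k+2)C_k(pq)^k=pq^2\Bigl(\frac{q}{p(p-q)}+\frac{2q}{p}\Bigr)=\frac{q^3}{p-q}+2q^3,
\]
and then
\[
\mathbb{E}[L]=p+2pq+2pq^2+2q^3+\frac{q^3}{p-q}.
\]
It remains to simplify this expression to $1+\frac{p^2q}{p-q}$; this is a routine algebraic step using $p+q=1$ (for instance, verifying that both sides, put over the common denominator $p-q$, have numerator $p-q+p^2q$).

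The only real temptation for error is the sign in $\sqrt{1-4pq}=p-q$ (which requires $q<\tfrac12$) and the bookkeeping of the $n=1,2$ boundary terms; the rest is a mechanical application of the Catalan generating function. As a sanity check, the same evaluations give $\sum_{n\ge 1}\mathbb{P}[L=n]=p(1+q+q^2)+q^3=1$, confirming that $L$ is almost surely finite under the hypothesis $q<\tfrac12$.
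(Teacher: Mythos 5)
Your proof is correct and follows essentially the same route as the paper: the paper's own proof simply cites the two identities $\sum_{n\ge 0}p(pq)^nC_n=1$ and $\sum_{n\ge 0}np(pq)^nC_n=\frac{q}{p-q}$ from a companion reference, which are exactly the evaluations $C(pq)=\frac1p$ and $xC'(x)\big|_{x=pq}=\frac{q}{p(p-q)}$ that you derive from $\sqrt{1-4pq}=p-q$. The only difference is that you supply the derivation of these identities and carry out the boundary-term bookkeeping explicitly, which the paper leaves to the citation.
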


\begin{proof}
  It comes from the well know relations
  \begin{align}
    \sum_{n \geqslant 0} p (pq)^n C_n & =  1  \label{pc}\\
    \sum_{n \geqslant 0} np (pq)^n C_n & =  \frac{q}{p - q}  \label{ec}
  \end{align}
  that have been already used and proved in {\cite{GPM18b}}.
\end{proof}

We can now compute the apparent hashrate.

\begin{theorem}
  \label{hashratesm}The long-term apparent hashrate of the selfish miner in
  Bitcoin is
  \[ \tilde{q}_B = \frac{[(p - q) (1 + pq) + pq] q - (p - q) p^2 q (1 -
     \gamma)}{pq^2 + p - q} \]
\end{theorem}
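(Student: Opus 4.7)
The plan is to compute $\tilde{q}_B=\mathbb{E}[Z]/\mathbb{E}[L]$ by a case analysis of the cycle classification of Theorem 2.2, combined with the value of $\mathbb{E}[L]$ from Corollary \ref{el}. Since honest blocks enter the official chain only for very short cycles, it will be cleaner to compute $\mathbb{E}[L-Z]$ (the expected number of honest blocks in the winning chain) and then set $\mathbb{E}[Z]=\mathbb{E}[L]-\mathbb{E}[L-Z]$.

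First I would handle the easy cases. For $L=1$, the cycle is $H$, so $L-Z=1$. For $L=n\geq 3$, Theorem 2.2 writes the cycle as $SSX_1\cdots X_{2(n-2)}H$ with the middle a Dyck word, which means the attacker's private fork has $n$ blocks against $n-1$ honest ones; publication at the terminal $H$ overrides the honest chain, so $Z=L=n$ and $L-Z=0$. The $L=2$ case is the only subtle one: in the sequences $SSH$ and $SHS$ the attacker's two-block fork defeats the single honest block ($Z=2$), while in $SHH$ the middle $H$ forces the attacker to publish his hidden $S$, provoking a race; the final $H$ extends the attacker's branch with probability $\gamma$ (then $Z=1$, $L-Z=1$) and extends the honest branch with probability $1-\gamma$ (then $Z=0$, $L-Z=2$). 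Collecting these contributions yields
\begin{equation*}
\mathbb{E}[L-Z]=1\cdot p+\bigl[\gamma\cdot 1+(1-\gamma)\cdot 2\bigr]\,p^2q=p+(2-\gamma)\,p^2q.
\end{equation*}

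Substituting into $\mathbb{E}[Z]=\mathbb{E}[L]-\mathbb{E}[L-Z]$ together with the formula $\mathbb{E}[L]=1+\tfrac{p^2q}{p-q}$ from Corollary \ref{el}, the ratio $\tilde{q}_B=\mathbb{E}[Z]/\mathbb{E}[L]$ reduces to the announced closed form after routine algebra using $p+q=1$; no further Catalan identities are required since the sum over $L\geq 3$ is already absorbed in $\mathbb{E}[L]$. The one delicate point is the $SHH$ case: the raw sequence probability $p^2q$ refers to the mining order, whereas the connectivity $\gamma$ enters only as the conditional probability that the third honest block extends the attacker's branch rather than the honest branch. Everything else is bookkeeping.
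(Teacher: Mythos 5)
Your proposal is correct and follows essentially the same route as the paper: the same case analysis ($Z=L$ for $L\geq 3$, $Z=0$ for the cycle $H$, and the three $L=2$ cycles with $Z(\text{SSH})=Z(\text{SHS})=2$ and $Z(\text{SHH})$ equal to $1$ or $0$ with probabilities $\gamma$ and $1-\gamma$), leading to $\mathbb{E}[Z]=\mathbb{E}[L]-\bigl(p+(2-\gamma)p^2q\bigr)$ and then division by $\mathbb{E}[L]$ from Corollary \ref{el}. The only difference is cosmetic: you phrase the computation via $\mathbb{E}[L-Z]$, which is exactly the quantity the paper subtracts.
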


\begin{proof}
  If $L \geqslant 3$, then we are in the cases where all blocks validated by
  the selfish miner will end in the official blockchain. So, $Z = L$. If $L =
  1$, then $Z = 0$. Moreover, $Z (\text{SSH}) = Z (\text{SHS}) = 2$ and $Z
  (\text{SHH}) = 0$ (resp. 1) with probability $1 - \gamma$ (resp. $\gamma$).
  So,
  \begin{align*}
    \mathbb{E} [Z] & =  \mathbb{E} [L] - p - p^2 q \gamma - 2 p^2 q (1 -
    \gamma)\\
    & = \mathbb{E} [L] - (p + p^2 q + p^2 q (1 - \gamma))
  \end{align*}
  Using Corollary \ref{el} we get:
  \begin{align*}
    \frac{\mathbb{E} [Z]}{\mathbb{E} [L]} & =  \frac{p^2 q + p - q - (p -
    q)  (p + p^2 q + p^2 q (1 - \gamma))}{pq^2 + p - q}\\
    & =  \frac{[(p - q) (1 + pq) + pq] q - (p - q) p^2 q (1 - \gamma)}{pq^2
    + p - q}
  \end{align*}
  This is nothing but Proposition 4.9 from {\cite{GPM18}} which is itself
  another form of Formula (8) from {\cite{ES14}}.
\end{proof}

\section{Stubborn Mining}

We consider now two other block witholding strategies described in
{\cite{KMNS16}}. In the sequel, $C (x) = \frac{1 - \sqrt{1 - 4 x}}{2 x}$
denotes the generating series for the Catalan numbers $(C_n)_{n \geqslant 0}$.

\subsection{Equal Fork Stubborn Mining}

In this strategy, the attacker never tries to override the official
blockchain but when it is possible, he broadcasts the part of his secret fork
sharing the same height as the official blockchain as soon as the honest
miners publish a new block. The attack cycle ends when the attacker has been
caught up and overtaken by the honest miners by one block
{\cite{GPM18b,KMNS16}}. We show that the distribution of $L - 1$ is what we
have called a $(p, q)$-Catalan distribution of first type in {\cite{GPM18b}}.

\begin{theorem}
  \label{thepefsm}We have $\forall n \in \mathbb{N}, \mathbb{P} [L = n + 1] =
  p (pq)^n C_n$.
\end{theorem}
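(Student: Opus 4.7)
The plan is to mirror the Dyck-word encoding used in the proof of Theorem 2.1, but adapted to the different termination rule of Equal Fork Stubborn Mining. First, I would encode an attack cycle as a random word $X_1 X_2 \cdots X_\tau$ in the alphabet $\{S,H\}$, where each $X_i$ is independently $S$ with probability $q$ and $H$ with probability $p$, indicating which miner found the $i$-th block. Under the EFSM protocol the attacker keeps extending his secret fork and merely publishes matching-height blocks on the fly, so the cycle ends precisely when the honest chain first overtakes the attacker's fork by exactly one block. Writing $s_k$ and $h_k$ for the running counts of $S$'s and $H$'s among $X_1,\ldots,X_k$, the cycle length is the stopping time
\[
\tau \;=\; \min\{\,k \ge 1 : h_k - s_k = 1\,\}.
\]

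Next, I would unpack what $\tau = 2n+1$ forces. Because $h_k - s_k$ starts at $0$ and moves by $\pm 1$, the first hit of $+1$ must occur at an odd time, the last letter $X_\tau$ must be $H$, and the prefix $X_1 \cdots X_{2n}$ must contain exactly $n$ copies of each of $S$ and $H$ with $h_j \le s_j$ for every $j \le 2n$. Under the identification of $S$ with an opening bracket and $H$ with a closing bracket, such prefixes are exactly the Dyck words of length $2n$, counted by the Catalan number $C_n$.

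I would then read off $L$ directly from the configuration at termination: the honest chain has height $h_\tau = n+1$ while the attacker's abandoned fork has height only $n$, so the network adopts the honest chain and $L = n+1$. Multiplying $C_n$ by the probability $p^{n+1} q^n$ of any fixed terminating word yields
\[
\mathbb{P}[L = n+1] \;=\; C_n\, p^{n+1} q^n \;=\; p\,(pq)^n\, C_n,
\]
the boundary case $n=0$ reducing to the one-letter cycle $H$ of probability $p = p\,(pq)^0 C_0$, consistent with the formula.

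The main obstacle I anticipate is not the combinatorics, which recycles the Catalan enumeration already used for Theorem 2.1, but the faithful translation of the EFSM protocol into the clean stopping rule $h_k - s_k = 1$. I would need to verify that the attacker's on-the-fly publication of matched blocks never terminates the cycle early, that the i.i.d.\ Bernoulli structure of the $X_i$ is preserved throughout, and that the tie-breaking parameter $\gamma$ plays no role in the distribution of $L$ (it should affect only the attacker's revenue $Z$ on matched competitions, which enters a later computation rather than this one).
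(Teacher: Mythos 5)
Your proposal is correct and follows essentially the same route as the paper: identify $\{L=n+1\}$ with the words $X_1\cdots X_{2n}H$ whose prefix is a Dyck word of length $2n$ (with $S$ as the opening bracket), count them by $C_n$, and multiply by the probability $p^{n+1}q^n$ of each such word. The paper's proof is just a one-line version of this; your extra care with the stopping-time formulation, the $n=0$ case, and the observation that $\gamma$ affects only $Z$ and not $L$ is consistent with, and fills in, what the paper leaves implicit.
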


\begin{proof}
  Indeed, for $n \in \mathbb{N}$, $\{L = n + 1\}$ is a collection of sequences
  of the form $w = X_1 \cdots X_{2 n} H$ with $X_i \in \{ S, H \}$ for all
  $i$, such that if $S$ and $H$ are respectively replaced by the brackets
  ``(`` and ``)'' then, $X_1 \cdots X_{2 n}$ is a Dyck word with length $2 n$.
\end{proof}

\begin{corollary}
  \label{elefsm}We have $\mathbb{E} [L] = \frac{p}{p - q}$
\end{corollary}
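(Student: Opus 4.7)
The plan is to compute $\mathbb{E}[L]$ directly from the distribution given in Theorem \ref{thepefsm}, splitting the sum so that the two generating-function identities \eqref{pc} and \eqref{ec} can be applied term by term.

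Using the substitution $k = n+1$ (or keeping the index $n$ and writing $L = (L-1)+1$), I would write
\[
\mathbb{E}[L] \;=\; \sum_{n \geq 0}(n+1)\,\mathbb{P}[L = n+1] \;=\; \sum_{n \geq 0}(n+1)\,p(pq)^n C_n
\;=\; \sum_{n \geq 0} n\,p(pq)^n C_n \;+\; \sum_{n \geq 0} p(pq)^n C_n.
\]
At this point there is essentially nothing left to do: the second sum equals $1$ by \eqref{pc}, and the first equals $\frac{q}{p-q}$ by \eqref{ec}. Adding these gives
\[
\mathbb{E}[L] \;=\; \frac{q}{p-q} + 1 \;=\; \frac{p}{p-q},
\]
which is the stated identity.

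There is no genuine obstacle here, since the two Catalan generating-function identities have already been established and used in the proof of Corollary \ref{el}. The only thing to be careful about is the index shift, because the distribution of $L$ is indexed so that $L - 1$ (and not $L$) follows the $(p,q)$-Catalan distribution of the first type; hence the factor $n+1$ rather than $n$ in the expected value, which is precisely what produces the extra $+1$ collapsing $\frac{q}{p-q}$ into $\frac{p}{p-q}$.
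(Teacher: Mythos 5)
Your proof is correct and is exactly the argument the paper intends when it says the corollary is ``Obvious by (\ref{pc}) and (\ref{ec})'': you write $\mathbb{E}[L]=\sum_{n\geq 0}(n+1)p(pq)^nC_n$ and evaluate the two resulting sums by those identities. The index shift you flag is handled correctly, and the computation $\frac{q}{p-q}+1=\frac{p}{p-q}$ matches the stated result.
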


\begin{proof}
  Obvious by (\ref{pc}) and (\ref{ec}).
\end{proof}

\begin{theorem}
  \label{thez}The long-term apparent hashrate of a miner following the
  Equal-Fork Stubborn Mining strategy is given by $\tilde{q} = \frac{q}{p} -
  \frac{(1 - \gamma)  (p - q)}{\gamma p}  (1 - pC ((1 - \gamma) pq))$.
\end{theorem}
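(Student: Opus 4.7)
The plan is to reduce the theorem to computing $\mathbb{E}[Z]$, since $\tilde q=\mathbb{E}[Z]/\mathbb{E}[L]$ and Corollary~\ref{elefsm} already gives $\mathbb{E}[L]=p/(p-q)$. I will condition on $L=n+1$, for which Theorem~\ref{thepefsm} says the cycle is a Dyck word of length $2n$ followed by a terminating $H$, and aim to describe $Z$ on this event as an explicit function of $n$ independent $\operatorname{Bernoulli}(\gamma)$ trials before summing against the distribution of $L$.

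In EFSM the attacker republishes his secret fork up to the current honest height after every honest block, so from the first $H$ of the cycle onward his chain and the honest chain are always tied at the top, and there is an active competition at each successive height. A short analysis of the resulting block tree will show that the final official chain must take the form $S_1 S_2 \cdots S_K H_{K+1} \cdots H_{n+1}$ for some $0 \leq K \leq n$, where $S_i$ denotes the attacker's $i$-th secret block; in particular $Z = K$.

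Next I set up the race structure. I claim the cycle contains exactly $n$ races, one at each height $k \in \{1,\ldots,n\}$; the race at height $k$ is born with the $k$-th $H$ of the cycle and resolved by the $(k+1)$-st $H$, which under the connectivity assumption lands on the attacker's side with probability $\gamma$, independently of everything else. Denoting these $n$ outcomes $B_1,\ldots,B_n$, an induction on $k$ tracking the shape of the official chain after race $k$ is resolved will show that this chain has the form $S_1 \cdots S_{M_k} H_{M_k+1} \cdots H_{k+1}$ with $M_k = \max\{j \leq k : B_j = 1\}$ (and $M_k = 0$ if no such $j$), so that $K = M_n = \max\{k : B_k = 1\}$. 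What I expect to be the main obstacle is the inductive step when $M_{k-1} < k-1$ (an earlier race was lost) but $B_k = 1$: one must check that the $(k+1)$-st $H$ landing on the attacker's secret sub-fork pulls the \emph{entire} chain $S_1 \cdots S_k$ back into the official chain, which amounts to observing that at that instant the attacker's chain has exactly one more block than the previously adopted honest chain.

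Once this is done, conditional on $L = n+1$ one has $P[K = k] = \gamma (1-\gamma)^{n-k}$ for $1 \leq k \leq n$ and $P[K = 0] = (1-\gamma)^n$, hence
\[
\mathbb{E}[K \mid L = n+1] = \sum_{k=1}^n P[K \geq k \mid L = n+1] = n - \sum_{j=1}^n (1-\gamma)^j.
\]
Summing against $P[L = n+1] = p(pq)^n C_n$, the ``$n$'' contribution yields $q/(p-q)$ via (\ref{ec}); the remaining double sum, after exchanging the order of summation and using (\ref{pc}) together with $\sum_{n \geq 0} p((1-\gamma)pq)^n C_n = p\,C((1-\gamma)pq)$, collapses to $\frac{1-\gamma}{\gamma}\bigl(1 - p\,C((1-\gamma)pq)\bigr)$. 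Therefore
\[
\mathbb{E}[Z] = \frac{q}{p-q} - \frac{1-\gamma}{\gamma}\bigl(1 - p\,C((1-\gamma)pq)\bigr),
\]
and dividing by $\mathbb{E}[L] = p/(p-q)$ gives the stated formula for $\tilde q$.
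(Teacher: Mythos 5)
Your proposal is correct and follows essentially the same route as the paper: compute $\mathbb{E}[Z\mid L=n+1]$, sum against the Catalan distribution of $L$ using (\ref{pc})--(\ref{ec}) and the generating series $C$, then divide by $\mathbb{E}[L]$ from Corollary~\ref{elefsm}. The only difference is that the paper simply cites Lemma B.1 of \cite{GPM18b} for the conditional expectation $\mathbb{E}[Z\mid L=n+1]=n+1-\frac{1-(1-\gamma)^{n+1}}{\gamma}$, whereas you derive the equivalent quantity $n-\sum_{j=1}^n(1-\gamma)^j$ from scratch via the height-by-height race analysis, which is a correct and self-contained substitute.
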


\begin{proof}
  In an attack cycle, all the honest blocks except the last one have a
  probability $\gamma$ to be replaced by the attacker. So, $\mathbb{E} [Z|L =
  n + 1] = n + 1 - \frac{1 - (1 - \gamma)^{n + 1}}{\gamma}$. See Lemma B.1
  {\cite{GPM18b}}. Conditionning by $\{ L = n + 1 \}$ for $n \in \mathbb{N}$
  and using Theorem \ref{thepefsm}, we then get
  \[ \mathbb{E} [Z] = \frac{q}{p - q} - \frac{1 - \gamma}{\gamma}  (1 - pC ((1
     - \gamma) pq)) \]
  Hence we get the result.
\end{proof}

\subsection{Lead Stubborn Mining}

The strategy looks like the selfish mining strategy but here, the attacker
takes the risk of being caught up by the honest miners. When this happens,
there is a final competition between two forks sharing the same height. Once
the competition is resolved, a new attack cycles starts. In this case, the
distribution of $L - 1$ is what we have called a $(p, q)$-Catalan distribution
of second type {\cite{GPM18b}}.

\begin{theorem}
  \label{theplsm}We have $\mathbb{P} [L = 1] = p$ and for $n \geqslant 1$, $
  \mathbb{P} [L = n + 1] = (pq)^n C_{n - 1}$.
\end{theorem}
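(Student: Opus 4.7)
The plan is to enumerate, for each $n$, the $S,H$-words that realize the event $\{L=n+1\}$ and read off their probabilities.

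The case $L=1$ is immediate: the cycle is trivial exactly when the first block of the cycle is mined honestly, for then the attacker has not yet produced any block to contest and the cycle ends at once with a single official block. This gives $\mathbb{P}[L=1]=p$.

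For $n\ge 1$ I will describe $\{L=n+1\}$ combinatorially. In Lead Stubborn Mining the cycle continues as long as the attacker's total block count strictly exceeds the honest miners' count; when the honest miners catch up, the two forks share the same height and one additional block is mined to break the tie. Hence each word in $\{L=n+1\}$ has the form $w=w_1\cdots w_{2n}\,R$ where the prefix $w_1\cdots w_{2n}$ contains exactly $n$ letters $S$ and $n$ letters $H$, every strict prefix of it has strictly more $S$'s than $H$'s (the attacker keeps a strict lead up to the very last block), and $R\in\{S,H\}$ is the resolution block. I will then check, by going through the sub-cases $R=S$, $R=H$ with the honest block placed on the attacker's fork (probability $\gamma$), and $R=H$ placed on the honest fork (probability $1-\gamma$), that in every situation the official chain ends up with exactly $n+1$ blocks, so the combinatorial description of $\{L=n+1\}$ is genuinely independent of $R$ and of $\gamma$.

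The counting step is then a standard Dyck-word calculation. Via the substitution $S\mapsto($ and $H\mapsto)$, the admissible prefixes $w_1\cdots w_{2n}$ are exactly the indecomposable Dyck words of length $2n$, i.e., those Dyck words that do not return to level $0$ before the final step. By the first-return decomposition every such word reads $(D)$ with $D$ an arbitrary Dyck word of length $2(n-1)$, and there are $C_{n-1}$ of them. Since each such prefix has probability $p^n q^n=(pq)^n$ and the resolution block $R$ contributes a factor $p+q=1$ once summed over its two possible values, we conclude $\mathbb{P}[L=n+1]=(pq)^n C_{n-1}$. The main subtle point, and the step I expect to require the most care, is the bookkeeping of the resolution block, i.e., confirming that the number of officially retained blocks is $n+1$ in each of the sub-cases; once that is done the remainder is parallel to the Dyck-word count already used in Theorem~\ref{thepefsm}.
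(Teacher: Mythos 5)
Your proof is correct and takes essentially the same route as the paper: the paper describes $\{L=n+1\}$ directly as the words $S X_1\cdots X_{2(n-1)} H Y$ with $X_1\cdots X_{2(n-1)}$ a Dyck word of length $2(n-1)$ and $Y\in\{S,H\}$ arbitrary, which is exactly your ``indecomposable Dyck prefix of length $2n$ followed by a resolution block'' after applying the first-return decomposition, yielding the same count $C_{n-1}$, the same prefix probability $(pq)^n$, and the same factor $p+q=1$ from the resolution block. The case analysis on the resolution block that you flag as the delicate step is simply taken for granted in the paper's one-line argument, so nothing further separates the two proofs.
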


\begin{proof}
  Indeed, we have $\{ L = 1 \} = \{ H \}$ and for $n \in \mathbb{N}$, $\{L = n
  + 1\}$ is a collection of sequences of the form $w = \text{SX}_1 \cdots X_{2
  (n - 1)} \text{HY}$ with $X_1, \ldots, X_{2 (n - 1)}, Y \in \{ S, H \}$ and
  such that if $S$ and $H$ are respectively replaced by the brackets ``(`` and
  ``)'' then, $X_1 \cdots X_{2 (n - 1)}$ is a Dyck word with length $2 (n -
  1)$.
\end{proof}

\begin{corollary}
  \label{ellsm}We have $\mathbb{E} [L] = \frac{p - q + pq}{p - q}$
\end{corollary}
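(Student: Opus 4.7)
The plan is to feed the explicit distribution from Theorem~\ref{theplsm} directly into the definition of $\mathbb{E}[L]$ and reduce everything to the two known identities (\ref{pc}) and (\ref{ec}). Concretely, I would start from
\[
  \mathbb{E}[L] \;=\; 1\cdot p \;+\; \sum_{n\geq 1}(n+1)\,(pq)^n\,C_{n-1}.
\]

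Next I would reindex the tail sum by $m=n-1$ so that the Catalan number has its natural index $C_m$: this turns the sum into $pq\sum_{m\geq 0}(m+2)(pq)^m C_m$, which I would split as $pq\sum_{m\geq 0} m(pq)^m C_m + 2pq\sum_{m\geq 0}(pq)^m C_m$. Each piece is a rescaled version of (\ref{pc}) or (\ref{ec}); after dividing by the factor $p$ appearing in those identities, the two sums evaluate to $\tfrac{q}{p(p-q)}$ and $\tfrac{1}{p}$ respectively.

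Putting the pieces together gives $\mathbb{E}[L] = p + \tfrac{q(2p-q)}{p-q}$. The final bookkeeping is to place everything over the common denominator $p-q$, yielding $\mathbb{E}[L] = \tfrac{p^2 + pq - q^2}{p-q}$, and to collapse $p^2 - q^2 = (p-q)(p+q) = p-q$ using $p+q=1$. This delivers exactly $\mathbb{E}[L] = \tfrac{p-q+pq}{p-q}$.

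There is no real obstacle here; the only subtle point compared with Corollary~\ref{el} is that the Catalan index is shifted by one relative to the exponent of $pq$, so one must carry out the reindexing cleanly and remember to use $p+q=1$ at the end to simplify $p^2-q^2$. Everything else is a routine application of the two series identities already invoked in the paper.
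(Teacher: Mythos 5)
Your proof is correct and follows exactly the route the paper intends: the paper's own proof simply says ``Obvious by (\ref{pc}) and (\ref{ec})'', and your reindexing, splitting, and use of $p+q=1$ is precisely the computation being left to the reader. All intermediate values check out.
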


\begin{proof}
  Obvious by (\ref{pc}) and (\ref{ec}).
\end{proof}

By repeating the same argument as in the proof of Theorem \ref{thez} for the
computation of $\mathbb{E} [Z]$, we obtain the following theorem
{\cite{GPM18b}}.

\begin{theorem}
  The long-term apparent hashrate of a miner following the Lead Stubborn
  Mining strategy is given by $\tilde{q} = \frac{q (p + pq - q^2)}{p + pq - q}
  - \frac{pq (p - q)  (1 - \gamma)}{\gamma} \cdot \frac{1 - p (1 - \gamma) C
  ((1 - \gamma) pq)}{p + pq - q}$
\end{theorem}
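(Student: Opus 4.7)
The plan is to repeat the argument from the proof of Theorem \ref{thez}. By Theorem \ref{theplsm} the distribution of $L$ is known explicitly, so $\tilde{q} = \mathbb{E}[Z]/\mathbb{E}[L]$ reduces to computing
\[
\mathbb{E}[Z] = \sum_{n \geq 0} \mathbb{E}[Z \mid L = n+1]\, \mathbb{P}[L = n+1],
\]
and combining with $\mathbb{E}[L] = \frac{p - q + pq}{p - q}$ from Corollary \ref{ellsm}.

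The first step is to compute the conditional expectation $\mathbb{E}[Z \mid L = n+1]$. For $n = 0$ the cycle is $\mathrm{H}$ and $Z = 0$. For $n \geq 1$ the cycle has the shape $\mathrm{S} X_1 \cdots X_{2(n-1)} \mathrm{H} Y$ from Theorem \ref{theplsm}. As in the Equal-Fork Stubborn Mining analysis of \cite{GPM18b}, each honest block mined during the lead phase contributes to the probability that the attacker captures a block on his published fork, governed by the connectivity parameter $\gamma$, and the terminal resolution block $Y$ adds its own contribution depending on whether $Y = \mathrm{S}$ (probability $p$), or $Y = \mathrm{H}$ landing on the attacker's or the honest fork (probabilities $q\gamma$ and $q(1-\gamma)$). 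Adapting Lemma~B.1 of \cite{GPM18b} to this cycle structure should yield a closed-form expression for $\mathbb{E}[Z \mid L = n+1]$ combining an affine-in-$n$ term with a geometric-in-$(1-\gamma)^n$ correction of the form $\frac{1-(1-\gamma)^n}{\gamma}$.

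Once this conditional expectation is in hand, the series for $\mathbb{E}[Z]$ splits into pieces that are summable using identities (\ref{pc}) and (\ref{ec}) together with the identity
\[
\sum_{n \geq 0} ((1-\gamma) pq)^n C_n = C((1-\gamma) pq),
\]
which is precisely where the Catalan generating function $C$ enters. After reindexing the sums starting at $n = 1$ via $k = n - 1$, the polynomial pieces contract by (\ref{pc}) and (\ref{ec}), while the $((1-\gamma)pq)^n C_{n-1}$ pieces produce the factor $p(1-\gamma) C((1-\gamma) pq)$ seen in the statement. Dividing by $\mathbb{E}[L]$ and simplifying yields the stated closed form.

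The main obstacle is establishing the correct formula for $\mathbb{E}[Z \mid L = n+1]$. Unlike in Equal-Fork Stubborn Mining, the Lead cycle ends with a fork resolution block $Y$ rather than a plain honest block, so one must carefully separate the contribution of the $n$ matched honest blocks during the lead phase from that of $Y$, making sure that no captured block is double-counted and that the boundary cases $n = 0$ and $n = 1$ are absorbed correctly into the general formula. Once this bookkeeping is handled cleanly, the remaining manipulations are mechanical and follow the template of the proof of Theorem \ref{thez}.
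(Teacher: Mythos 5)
Your proposal matches the paper's approach exactly: the paper gives no more detail than "repeat the argument of Theorem \ref{thez}" with a citation to \cite{GPM18b}, which is precisely your plan of computing $\mathbb{E}[Z\mid L=n+1]$ via the analogue of Lemma B.1, summing against the Catalan distribution of Theorem \ref{theplsm} using (\ref{pc}), (\ref{ec}) and the generating function $C$, and dividing by $\mathbb{E}[L]$ from Corollary \ref{ellsm}. Your sketch is in fact somewhat more explicit than the paper's one-line proof, and correctly identifies the fork-resolution block $Y$ as the only place where the bookkeeping differs from the Equal-Fork case.
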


We color the region $(q, \gamma) \in [0, 0.5] \times [0, 1]$ according to
which strategy is more profitable, and we obtain Figure 1 {\cite{GPM18b}} (HM
is the honest mining strategy).

\begin{figure}[!ht]
   \includegraphics[height=6cm, width=9cm]{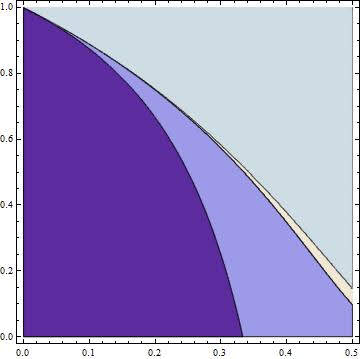}
   \put(-190, 70){\tiny{$HM$}}
   \put(-97,70){\tiny{$SM$}}
   \put(-57,40){\tiny{$LSM \longrightarrow$}}
   \put(-65,110){\tiny{$EFSM$}}
   \caption{
   Dominance regions in parameter space $(q,\gamma)$.
   \medskip}
\end{figure}

\end{document}